\let\reftagform@=\tagform@
\def\tagform@#1{\maketag@@@{(\ignorespaces\textcolor{blue}{#1}\unskip\@@italiccorr)}}
\renewcommand{\eqref}[1]{\textup{\reftagform@{\ref{#1}}}}
\newtheorem{theorem}{Theorem}[section]
\newtheorem{lemma}[theorem]{Lemma}
\newtheorem{proposition}[theorem]{Proposition}
\newtheorem{corollary}[theorem]{Corollary}
\theoremstyle{definition}
\newtheorem{definition}[theorem]{Definition}
\newtheorem{example}[theorem]{Example}
\theoremstyle{remark}
\newtheorem{remark}[theorem]{Remark}
\numberwithin{equation}{section}
  \newcommand{\Mmn}{\mathbb{M}_m \to \mathbb{M}_n}
\newcommand{\tr}{ \mathrm{tr}}
\begin{document}
\setcounter{page}{1}
\title[Deviation from Complete Positivity]{ A Hierarchy of Deviation from Complete Positivity and Optimal Entanglement Witnesses}
\author[M. Kian]{Mohsen Kian}

\address{Mohsen Kian: Department of Mathematics, University of Bojnord, P. O. Box 1339, Bojnord 94531, Iran}
\email{kian@ub.ac.ir }

\subjclass[2020]{Primary:47L07, 15B48; Secondary: 81P40}

\keywords{CP-distance, positive Hermitian map, completely positive,  decomposition, entanglement witness}

\begin{abstract}
We introduce the \emph{CP-distance} to quantify the deviation of Hermitian linear maps from complete positivity, defined as the minimal depolarizing noise required to render a map completely positive. We derive a closed spectral formula for this distance and extend the framework to \emph{directional robustness} against arbitrary completely positive maps, establishing stability and tensor-product properties. Expanding this to the intermediate cones of $k$-positive maps, we introduce a \emph{hierarchy of deviation}, $d_k(\Phi)$. We derive a spectral formula for $d_k$ based on entanglement depth and demonstrate that it serves as an optimal threshold for certifying Schmidt numbers, allowing for the universal construction of dimension-sensitive entanglement witnesses.
\end{abstract}

\maketitle


\section{Introduction and Preliminaries}

Positive linear maps between matrix algebras play a central role in matrix analysis,
operator theory, and operator algebras.  Among them, completely positive (CP) maps
form a distinguished cone: they are exactly those maps that admit Stinespring/Kraus
representations, and they are characterized in finite dimensions by positivity of the
associated Choi matrix.  Nevertheless, many natural positivity conditions are strictly
weaker than complete positivity, and cones of positive (but not CP) maps arise
throughout the theory (for example, in the study of mapping cones, tensor products,
and order structures on matrix spaces).

Complete positivity admits several equivalent characterizations. In finite dimensions,
the Choi--Jamio\l kowski correspondence identifies linear maps
$\Phi:\mathbb{M}_m\to\mathbb{M}_n$ with matrices $\mathbf{C}_\Phi\in\mathbb{M}_m\otimes\mathbb{M}_n$,
and $\Phi$ is completely positive if and only if $\mathbf{C}_\Phi\geq 0$
\cite{Choi1975,Jamiolkowski1972}. Standard references for completely positive and completely bounded
maps, and for matrix-ordered structures, include monographs by Paulsen and Watrous
\cite{Paulsen2002,Watrous2018}.

Beyond the CP cone, the geometry of cones associated with positivity (such as the separable cone,
the positive semidefinite cone, and the block-positive cone) and their duality relations have been
studied extensively; see, e.g., Ando's analysis of cones and norms on tensor products of matrix
spaces \cite{Ando2004} and the literature on mapping cones and dual cones of positive maps
\cite{StormerMappingCones,SkowronekStormerZyczkowski2009}. This cone-theoretic viewpoint motivates
quantitative questions measuring how far a given map or Choi matrix is from a target cone.

A related theme—especially visible in quantum information theory—is to render a positive but non-CP
map physically implementable by adding suitable noise. The \emph{structural physical approximation}
(SPA) program studies mixtures with depolarizing-type channels to obtain completely positive maps,
often with the goal of entanglement detection or implementability \cite{KorbiczSPA2008,ShultzSPA2015}.
Conceptually, such constructions amount to moving along a fixed CP direction until one enters the CP
cone, which is closely aligned with the distance-type quantities studied in the present paper.

Quantitative “robustness” functionals based on minimal mixing are also common in convex-geometry and
resource-theoretic settings; a prominent example is the robustness and generalized robustness of
entanglement \cite{VidalTarrach1999,Steiner2003}. These notions are cone gauges defined by the
minimal amount of a chosen direction (or an arbitrary state) needed to reach the free cone, and they
motivate direction-dependent measures.

A basic quantitative question, motivated purely by convex geometry of cones, is:
\emph{how far} is a Hermiticity-preserving map from the cone of CP maps?  One canonical
way to ``regularize'' a non-CP map is to mix it with a fixed CP direction (most notably,
the depolarizing map).  This philosophy also appears in the structural physical
approximation program, where one adds sufficiently much depolarizing noise to make a
map completely positive.

In this paper, we systematically quantify the deviation of Hermitian maps from complete positivity. We first analyze the CP-distance \(d_{\mathrm{CP}}(\Phi)\) and its directional generalization \(d_\Gamma(\Phi)\), deriving explicit spectral formulas and establishing key structural properties including stability estimates and tensor-product behavior.

We then extend this framework to the intermediate cones of \(k\)-positive maps by introducing the \emph{hierarchy of deviation} \(d_k(\Phi)\). We derive a spectral characterization for \(d_k\) based on entanglement depth and prove its operational significance: \(d_k(\Phi)\) serves as a rigorous, optimal threshold for certifying Schmidt numbers. This framework enables the universal construction of high-dimensional entanglement witnesses from arbitrary Hermitian maps, bridging the gap between abstract cone geometry and quantum resource verification.

\medskip
\noindent\textbf{Preliminaries.}
Let \(\mathbb{M}_m\) denote the complex \(m\times m\) matrices and
\(\Mmn=\mathcal{L}(\mathbb{M}_m,\mathbb{M}_n)\) the space of linear maps.
We identify \(\mathbb{M}_m(\mathbb{M}_n)\cong \mathbb{M}_m\otimes\mathbb{M}_n\) and use
the Hilbert--Schmidt pairing \(\langle X,Y\rangle=\tr(X^\ast Y)\).
We consider the following three natural cones in \(\mathbb{M}_m\otimes\mathbb{M}_n\):
\begin{align*}
\mathfrak{P}_0 &:= \{X\in \mathbb{M}_m\otimes\mathbb{M}_n : X\geq 0\},\\
\mathfrak{P}_+ &:= \Bigl\{\sum_{k=1}^r A_k\otimes B_k : r\in\mathbb{N},\ A_k\geq 0,\ B_k\geq 0\Bigr\},\\
\mathfrak{P}_- &:= \{X=X^\ast : \langle x\otimes y,\; X(x\otimes y)\rangle \ge 0 \ \text{for all}\ x\in\mathbb{C}^m,\ y\in\mathbb{C}^n\}.
\end{align*}
Then \(\mathfrak{P}_+\subseteq \mathfrak{P}_0\subseteq \mathfrak{P}_-\).
The cone \(\mathfrak{P}_+\) is the cone generated by positive simple tensors, while
\(\mathfrak{P}_-\) is the cone of block-positive matrices (equivalently, the dual cone
of \(\mathfrak{P}_+\) under the above pairing).

\medskip

Let \(\{E_{ij}\}_{i,j=1}^m\) be the matrix units in \(\mathbb{M}_m\).
For \(\Phi\in\Mmn\), the Choi matrix of \(\Phi\) is
\[
\mathbf{C}_\Phi := \sum_{i,j=1}^m E_{ij}\otimes \Phi(E_{ij}) \ \in\ \mathbb{M}_m\otimes\mathbb{M}_n.
\]
We will use freely the Choi--Jamio\l kowski correspondence and, in particular, the
fact that \(\Phi\) is completely positive if and only if \(\mathbf{C}_\Phi\geq 0\).
We refer to standard texts for background and proofs.


\section{Main Result}
Following the notations in \cite{Ando2018}, let \(\mathbb{M}_{(m,n)}\) be the real subspace of $\mathbb{M}_m(\mathbb{M}_n)$ consisting of all Hermitian linear maps \(\Phi: \mathbb{M}_m \to \mathbb{M}_n\). Consider the  two orders on \(\mathbb{M}_{(m,n)}\) as
\begin{align}\label{po}
\Phi \leq \Psi \quad \Longleftrightarrow \quad \Psi-\Phi \quad \text{is a positive map},
\end{align}
and
\begin{align}\label{cpo}
\Phi \leq_{\text{CP}} \Psi \quad \Longleftrightarrow \quad \Psi-\Phi \quad \text{is a completely positive map}.
\end{align}

Our first result provides a method to transform any Hermitian linear map into a completely positive map, a crucial operation in quantum contexts. This is achieved by adding a minimal scalar adjustment to ensure positivity, while preserving the map's action on the identity matrix (related to trace-preservation) as much as possible.

\begin{theorem}\label{th-alli}
Let \(\Phi: \Mmn\) be a Hermitian linear map. Then, there exists a smallest non-negative scalar \(k_\Phi\) such that the map \(\Psi: \Mmn\) defined by:
\[ \Psi(A) = \Phi(A) + k_\Phi \cdot \tr(A) I_n \]
is completely positive and satisfies \(\Phi \leq \Psi\).
\end{theorem}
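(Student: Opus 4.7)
The strategy is to translate the question into a condition on Choi matrices. First I would compute the Choi matrix of the auxiliary map $T(A) := \tr(A) I_n$. Since $T(E_{jk}) = \tr(E_{jk}) I_n = \delta_{jk} I_n$, the Choi matrix is the block-diagonal matrix $[\delta_{jk} I_n]_{j,k=1}^m = I_{mn}$. By linearity of the assignment $\Phi \mapsto \mathbf{C}_\Phi$, the candidate map $\Psi_k := \Phi + k\,T$ therefore has
\[
\mathbf{C}_{\Psi_k} = \mathbf{C}_\Phi + k\, I_{mn},
\]
so varying $k$ amounts to a rigid shift of the spectrum of $\mathbf{C}_\Phi$.

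With this in hand, I would analyze the two requirements separately for $k \geq 0$. The order $\Phi \leq \Psi_k$ asks that $\Psi_k - \Phi = k\, T$ be a positive map; since $\tr(A) I_n \geq 0$ whenever $A \geq 0$, this holds (in fact $kT$ is already completely positive) for every $k \geq 0$, so this requirement is non-binding. Complete positivity of $\Psi_k$ is characterized, via the Choi criterion recalled in the introduction, by $\mathbf{C}_{\Psi_k} \in \mathfrak{P}_0$, i.e., $\mathbf{C}_\Phi + k I_{mn} \geq 0$, which is equivalent to $k \geq -\lambda_{\min}(\mathbf{C}_\Phi)$. Here $\lambda_{\min}(\mathbf{C}_\Phi)$ is a real number since $\mathbf{C}_\Phi$ is Hermitian (as $\Phi$ is Hermitian).

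Combining the two bounds, I would set
\[
k_\Phi := \max\bigl\{0,\, -\lambda_{\min}(\mathbf{C}_\Phi)\bigr\},
\]
a non-negative real number that equals zero precisely when $\Phi$ is already completely positive. Minimality is then immediate: any strictly smaller non-negative $k'$ would leave a negative eigenvalue in the shifted spectrum $\mathbf{C}_\Phi + k' I_{mn}$, so $\Psi_{k'}$ would fail to be CP. I do not anticipate a genuine obstacle in this argument: the only substantive observation is that $\mathbf{C}_T = I_{mn}$, after which the problem reduces to the elementary fact that a Hermitian matrix is made positive semidefinite precisely by shifting it by at least $-\lambda_{\min}$ times the identity. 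The mild point of care is keeping the positive order and the CP order distinct, though here both happen to be satisfied automatically once $k \geq 0$.
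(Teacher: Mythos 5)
Your proposal is correct and follows essentially the same route as the paper: identify $\mathbf{C}_{\Phi + k\,\tr(\cdot)I_n} = \mathbf{C}_\Phi + k\,I_{mn}$, note that the order $\Phi \leq \Psi$ is automatic for $k \geq 0$, and take $k_\Phi = \max\{0, -\lambda_{\min}(\mathbf{C}_\Phi)\}$ with minimality following from the spectral shift. No gaps.
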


\begin{proof}
Let \(\Psi(A) = \Phi(A) + k \cdot \tr(A) I_n\) for some \(k \geq 0\). Then clearly,
\( (\Psi - \Phi)(A) = k \cdot \tr(A) I_n \)  is positive semi-definite for every \(A \geq 0\).  Thus, \(\Phi \leq \Psi\) holds for all \(k \geq 0\). The Choi matrix of \(\Psi\) is
\begin{align*}
\mathbf{C}_\Psi &= \sum_{i,j=1}^m E_{ij} \otimes \Psi(E_{ij}) = \sum_{i,j} E_{ij} \otimes \left( \Phi(E_{ij}) + k \cdot \tr(E_{ij}) I_n \right).
\end{align*}
Since \(\tr(E_{ij}) = \delta_{ij}\), this becomes:
\begin{align*}
\mathbf{C}_\Psi = \mathbf{C}_\Phi + k \sum_{i=1}^m E_{ii} \otimes I_n = \mathbf{C}_\Phi + k (I_m \otimes I_n).
\end{align*}
For \(\Psi\) to be completely positive, we need the Choi matrix \(\mathbf{C}_\Psi\) to be positive semi-definite, meaning all its eigenvalues must be non-negative. As \(\mathbf{C}_\Phi\) is Hermitian, this requires:
\[ k \geq -\lambda_{\text{min}}(\mathbf{C}_\Phi), \]
since adding \(k (I_m \otimes I_n)\) shifts all eigenvalues of \(\mathbf{C}_\Phi\) by \(k\). But we need also \(k\) to be non-negative. We set
\[ k_\Phi = \max\left(0, -\lambda_{\text{min}}(\mathbf{C}_\Phi)\right), \]
where \(\lambda_{\text{min}}(\mathbf{C}_\Phi)\) is the smallest eigenvalue of \(\mathbf{C}_\Phi\).
This implies that if \(\lambda_{\text{min}}(\mathbf{C}_\Phi) \geq 0\), then \(k_\Phi = 0\), and \(\Psi = \Phi\) (since in this case \(\Phi\) is completely positive). If \(\lambda_{\text{min}}(\mathbf{C}_\Phi) < 0\), then \(k_\Phi = -\lambda_{\text{min}}(\mathbf{C}_\Phi) > 0\). To show the minimality of \(k_\Phi\), suppose \(k < k_\Phi\).
If \(k_\Phi = 0\) (i.e., \(\lambda_{\text{min}}(\mathbf{C}_\Phi) \geq 0\)), then \(k < 0\), but this would make \((\Psi - \Phi)(A) = k \cdot \tr(A) I_n < 0\) for \(A \geq 0\) with \(\tr(A) > 0\), violating \(\Phi \leq \Psi\).
If \(k_\Phi = -\lambda_{\text{min}}(\mathbf{C}_\Phi) > 0\), then \(k < -\lambda_{\text{min}}(\mathbf{C}_\Phi)\), the smallest eigenvalue of \(\mathbf{C}_\Psi\) is
\[ \lambda_{\text{min}}(\mathbf{C}_\Phi) + k < \lambda_{\text{min}}(\mathbf{C}_\Phi) + (-\lambda_{\text{min}}(\mathbf{C}_\Phi)) = 0, \]
so \(\mathbf{C}_\Psi \not\geq 0\), and \(\Psi\) is not completely positive.
Thus, \(k_\Phi\) is indeed the smallest scalar satisfying both conditions.
\end{proof}

Let us give an example.
\begin{example}\label{ex1}
Consider $\Phi:  \mathbb{M}_2\to \mathbb{M}_2$ defined by \(\Phi(A) = A^T\), where \(A^T\) is the transpose of \(A\).
The Choi matrix of \(\Phi\) is \(\mathbf{C}_\Phi = \sum_{i,j=1}^2 E_{ij} \otimes E_{ji}\), the swap operator, with eigenvalues 1, 1, -1, -1. Accordingly, \(\lambda_{\text{min}}(\mathbf{C}_\Phi) = -1 < 0\) and so
\[ k_\Phi = \max(0, -(-1)) = 1. \]
Therefore
\[ \Psi(A) = A^T + 1 \cdot \tr(A) I_2. \]
Clearly, \(\Phi \leq \Psi\) and
\(\mathbf{C}_\Psi = \mathbf{C}_\Phi + I_4\) has eigenvalues: 2, 2, 0, 0. Hence \(\mathbf{C}_\Psi \geq 0\), and \(\Psi\) is completely positive. Furthermore, if \(\epsilon > 0\) and we put \(k = 1 - \epsilon\), then
\(\mathbf{C}_\Psi = \mathbf{C}_\Phi + (1-\epsilon) I_4\) has eigenvalues: \(2-\epsilon\), \(2-\epsilon\), \(-\epsilon\), \(-\epsilon\), confirming that \(\mathbf{C}_\Psi \not\geq 0\), and \(\Psi\) is not completely positive. Consequently, \(k_\Phi = 1\) is the smallest constant satisfying the required properties, consistent with Theorem~\ref{th-alli}.
\end{example}

\bigskip
In quantum information theory, completely positive maps correspond to physically realizable operations, so understanding how close a Hermitian map is to this set is valuable.
We provide a quantitative measure of the extent to which a Hermitian linear map deviates from complete positivity. Let \(\Phi: \Mmn\) be a Hermitian linear map. We define the CP-distance of \(\Phi\), denoted by \(d_{\text{CP}}(\Phi)\), as the smallest scalar \(k \geq 0\) for which the map \(\Psi = \Phi + k \tr(\cdot) I_n\) is completely positive. The CP-distance is given specifically by
\[
d_{\text{CP}}(\Phi) = \max\{0, -\lambda_{\text{min}}(\mathbf{C}_\Phi)\}.
\]
If \(\Phi\) is not completely positive, then \(\mathbf{C}_\Phi\) has at least one negative eigenvalue. The CP-distance measures the minimal ``shift" needed to make all eigenvalues non-negative when adding a simple completely positive map. Theorem~\ref{th-alli} proves that CP-distance exists for every Hermitian map.

While trace distance measures the distinguishability of quantum states and fidelity quantifies their similarity, CP-distance provides a unique perspective by measuring how close a linear map is to being a completely positive, physically realizable quantum operation. This property makes it especially valuable in applications like quantum error correction and channel discrimination, where operational feasibility is key.

\begin{remark}
Example \ref{ex1} illustrates the role of CP-distance in transforming a positive but non-CP map into a CP map. The transposition map is a classic example in quantum information theory, often used as an entanglement witness because it is positive but not CP, as evidenced by the negative eigenvalues of its Choi matrix \cite{Paulsen2002}. The CP-distance \(k_\Phi = 1\) quantifies the minimal adjustment needed to make it CP, aligning with the magnitude of the smallest eigenvalue.
\end{remark}

\bigskip

The basic properties of the CP-distance are summarized as follows. Although these are straightforward from the definition, we include a short proof for completeness.
\begin{proposition}\label{prop:basic-summary}
Let \(\Phi, \Psi: \Mmn\) be Hermitian linear maps. The CP-distance \(d_{\text{CP}}\) satisfies:
\begin{enumerate}
    \item Subadditivity: \(d_{\text{CP}}(\Phi + \Psi) \leq d_{\text{CP}}(\Phi) + d_{\text{CP}}(\Psi)\).
    \item Homogeneity: For \(\alpha \geq 0\), \(d_{\text{CP}}(\alpha \Phi) = \alpha d_{\text{CP}}(\Phi)\).
    \item Convexity: For \(0 \leq t \leq 1\), \(d_{\text{CP}}(t \Phi + (1-t) \Psi) \leq t d_{\text{CP}}(\Phi) + (1-t) d_{\text{CP}}(\Psi)\).
    \item Invariance under unitary conjugation: If \(U\) is unitary, then \(d_{\text{CP}}(\Phi_U) = d_{\text{CP}}(\Phi)\), where \(\Phi_U(A) = U \Phi(A) U^*\).
\end{enumerate}
\end{proposition}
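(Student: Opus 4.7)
The plan is to reduce every part to the explicit formula
\[
d_{\mathrm{CP}}(\Phi) = \max\{0, -\lambda_{\min}(\mathbf{C}_\Phi)\}
\]
established right after Theorem~\ref{th-alli}, together with the linearity of the Choi map $\Phi\mapsto \mathbf{C}_\Phi$ and standard spectral facts for Hermitian matrices. Concretely, I would first record the identities $\mathbf{C}_{\Phi+\Psi}=\mathbf{C}_\Phi+\mathbf{C}_\Psi$ and $\mathbf{C}_{\alpha\Phi}=\alpha\mathbf{C}_\Phi$ for $\alpha\geq 0$, and note that the function $x\mapsto\max\{0,x\}$ is nondecreasing, positively homogeneous, and subadditive on $\mathbb{R}$.

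For homogeneity (2), I would use $\lambda_{\min}(\alpha \mathbf{C}_\Phi)=\alpha\lambda_{\min}(\mathbf{C}_\Phi)$ for $\alpha\geq 0$ and then pull the $\alpha$ outside the max. For subadditivity (1), I would invoke Weyl's inequality $\lambda_{\min}(A+B)\geq \lambda_{\min}(A)+\lambda_{\min}(B)$ for Hermitian $A,B$, so that $-\lambda_{\min}(\mathbf{C}_{\Phi+\Psi})\leq -\lambda_{\min}(\mathbf{C}_\Phi)-\lambda_{\min}(\mathbf{C}_\Psi)$, and then apply subadditivity of $\max\{0,\cdot\}$ to conclude. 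Convexity (3) then drops out by combining (1) applied to $t\Phi$ and $(1-t)\Psi$ with (2) applied to each summand.

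For unitary invariance (4), the key computation is
\[
\mathbf{C}_{\Phi_U}=\sum_{i,j=1}^m E_{ij}\otimes U\Phi(E_{ij})U^*=(I_m\otimes U)\,\mathbf{C}_\Phi\,(I_m\otimes U^*).
\]
Because $I_m\otimes U$ is unitary, $\mathbf{C}_{\Phi_U}$ and $\mathbf{C}_\Phi$ are unitarily similar and hence have the same spectrum; in particular the same minimum eigenvalue, so $d_{\mathrm{CP}}(\Phi_U)=d_{\mathrm{CP}}(\Phi)$.

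None of the steps is genuinely hard; the only place requiring a small care is (1), where one must remember to apply $\max\{0,\cdot\}$ \emph{after} Weyl's inequality and verify that $\max\{0,a+b\}\leq\max\{0,a\}+\max\{0,b\}$ for real $a,b$ (which is immediate from the definition by checking the two cases $a+b\leq 0$ and $a+b>0$). Everything else is a direct substitution into the closed form of $d_{\mathrm{CP}}$.
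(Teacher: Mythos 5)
Your proposal is correct and follows essentially the same route as the paper: the closed form \(d_{\mathrm{CP}}(\Phi)=\max\{0,-\lambda_{\min}(\mathbf{C}_\Phi)\}\), linearity of \(\Phi\mapsto\mathbf{C}_\Phi\), Weyl's inequality \(\lambda_{\min}(A+B)\geq\lambda_{\min}(A)+\lambda_{\min}(B)\) for (1), scaling of \(\lambda_{\min}\) for (2), combining (1) and (2) for (3), and the unitary similarity \(\mathbf{C}_{\Phi_U}=(I_m\otimes U)\mathbf{C}_\Phi(I_m\otimes U^*)\) for (4). The only difference is cosmetic: where the paper verifies subadditivity by a four-way case analysis on the signs of the minimal eigenvalues, you package the same check into the monotonicity and subadditivity of \(x\mapsto\max\{0,x\}\), which is a slightly cleaner but equivalent finish.
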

\begin{proof}
Consider Hermitian linear maps \(\Phi, \Psi: \Mmn\). The Choi matrix of their sum is:
\[ \mathbf{C}_{\Phi + \Psi} = \mathbf{C}_\Phi + \mathbf{C}_\Psi, \]
and the CP-distance is:
\[ d_{\text{CP}}(\Phi + \Psi) = \max(0, -\lambda_{\text{min}}(\mathbf{C}_\Phi + \mathbf{C}_\Psi)). \]
As a known fact in matrix analysis (see e.g., \cite[Theorem 4.3.1]{horn2012matrix}) for Hermitian matrices \(A\) and \(B\),
\begin{align}\label{eig}
\lambda_{\text{min}}(A + B) \geq \lambda_{\text{min}}(A) + \lambda_{\text{min}}(B).
\end{align}
Hence
\begin{align}\label{eig2}
-\lambda_{\text{min}}(\mathbf{C}_\Phi + \mathbf{C}_\Psi) \leq -\lambda_{\text{min}}(\mathbf{C}_\Phi) - \lambda_{\text{min}}(\mathbf{C}_\Psi).
\end{align}
We proceed by cases:

\textbf{Case 1:} \(\lambda_{\text{min}}(\mathbf{C}_\Phi) \geq 0\), \(\lambda_{\text{min}}(\mathbf{C}_\Psi) \geq 0\).
Then \(d_{\text{CP}}(\Phi) = 0 = d_{\text{CP}}(\Psi) = 0\). In this case by \eqref{eig} we have \(\lambda_{\text{min}}(\mathbf{C}_\Phi + \mathbf{C}_\Psi) \geq 0\) and so
\[ d_{\text{CP}}(\Phi + \Psi) = 0 \leq 0 = d_{\text{CP}}(\Phi) + d_{\text{CP}}(\Psi). \]

\textbf{Case 2:} \(\lambda_{\text{min}}(\mathbf{C}_\Phi) < 0\), \(\lambda_{\text{min}}(\mathbf{C}_\Psi) \geq 0\).
Here we have \(d_{\text{CP}}(\Phi) = -\lambda_{\text{min}}(\mathbf{C}_\Phi) > 0\) and \(d_{\text{CP}}(\Psi) = 0\). Consider two possibilities. First if
\(\lambda_{\text{min}}(\mathbf{C}_\Phi + \mathbf{C}_\Psi) < 0\), then \eqref{eig2} gives
\[ d_{\text{CP}}(\Phi + \Psi) = -\lambda_{\text{min}}(\mathbf{C}_\Phi + \mathbf{C}_\Psi) \leq -\lambda_{\text{min}}(\mathbf{C}_\Phi) = d_{\text{CP}}(\Phi) = d_{\text{CP}}(\Phi) + d_{\text{CP}}(\Psi). \]
Second, if \(\lambda_{\text{min}}(\mathbf{C}_\Phi + \mathbf{C}_\Psi) \geq 0\), then \(d_{\text{CP}}(\Phi + \Psi) = 0 < d_{\text{CP}}(\Phi)\).
Thus, \(d_{\text{CP}}(\Phi + \Psi) \leq d_{\text{CP}}(\Phi) + 0\).

\textbf{Case 3:} \(\lambda_{\text{min}}(\mathbf{C}_\Phi) \geq 0\), \(\lambda_{\text{min}}(\mathbf{C}_\Psi) < 0\).
Similar to Case 2.

\textbf{Case 4:} \(\lambda_{\text{min}}(\mathbf{C}_\Phi) < 0\), \(\lambda_{\text{min}}(\mathbf{C}_\Psi) < 0\).
Then \(d_{\text{CP}}(\Phi) = -\lambda_{\text{min}}(\mathbf{C}_\Phi) > 0\), \(d_{\text{CP}}(\Psi) = -\lambda_{\text{min}}(\mathbf{C}_\Psi) > 0\). If \(\lambda_{\text{min}}(\mathbf{C}_\Phi + \mathbf{C}_\Psi) < 0\), then \eqref{eig2} implies that
\[ d_{\text{CP}}(\Phi + \Psi) = -\lambda_{\text{min}}(\mathbf{C}_\Phi + \mathbf{C}_\Psi) \leq -\lambda_{\text{min}}(\mathbf{C}_\Phi) - \lambda_{\text{min}}(\mathbf{C}_\Psi) = d_{\text{CP}}(\Phi) + d_{\text{CP}}(\Psi). \]
In all cases, \(d_{\text{CP}}(\Phi + \Psi) \leq d_{\text{CP}}(\Phi) + d_{\text{CP}}(\Psi)\).

(2) It follows from \(\mathbf{C}_{\alpha\Phi} = \alpha \mathbf{C}_\Phi\) that \(\lambda_{\mathrm{min}}(\mathbf{C}_{\alpha\Phi}) = \alpha \lambda_{\mathrm{min}}(\mathbf{C}_{\Phi})\) for every \(\alpha \geq 0\). Hence
\begin{align*}
d_{\text{CP}}(\alpha \Phi) = \max(0, -\lambda_{\min}(\mathbf{C}_{\alpha \Phi})) &= \max(0, -\alpha \lambda_{\min}(\mathbf{C}_\Phi)) \\
&= \alpha \max(0, -\lambda_{\min}(\mathbf{C}_\Phi)) = \alpha d_{\text{CP}}(\Phi).
\end{align*}
(3) follows from (1) and (2).

(4) Let \(U \in \mathbb{M}_n\) be unitary, and define \(\Phi_U(A) = U \Phi(A) U^*\). The Choi matrix of \(\Phi_U\) is:
\[
\mathbf{C}_{\Phi_U} = \sum_{i,j=1}^m E_{ij} \otimes U \Phi(E_{ij}) U^* = (I_m \otimes U) \mathbf{C}_\Phi (I_m \otimes U^*).
\]
Since \(I_m \otimes U\) is unitary, \(\mathbf{C}_{\Phi_U}\) is unitarily similar to \(\mathbf{C}_\Phi\), so:
\[ \lambda_{\min}(\mathbf{C}_{\Phi_U}) = \lambda_{\min}(\mathbf{C}_\Phi), \]
concluding (4).
\end{proof}


\section{Directional CP-robustness along a completely positive direction}\label{sec:directional}

The quantity $d_{\mathrm{CP}}(\Phi)$ quantifies the minimal contribution of the specific map $\Delta(A)=\tr(A)I_n$ required to enforce complete positivity. To obtain a framework independent of this specific reference, we introduce a robustness functional along an \emph{arbitrary} completely positive direction.

\begin{definition}\label{def:dGamma}
Let $\Gamma:\mathbb{M}_m\to\mathbb{M}_n$ be completely positive and let $\Phi:\mathbb{M}_m\to\mathbb{M}_n$
be Hermiticity-preserving. Define
\[
d_\Gamma(\Phi):=\inf\bigl\{t\ge 0:\ \Phi+t\Gamma\ \text{is completely positive}\bigr\}.
\]
Equivalently, in terms of Choi matrices,
\[
d_\Gamma(\Phi)=\inf\bigl\{t\ge 0:\ \mathbf{C}_\Phi+t\,\mathbf{C}_\Gamma \geq 0\bigr\}.
\]
\end{definition}

\begin{remark}
  Proposition~\ref{prop:basic-summary} extends verbatim to $d_\Gamma$ by replacing
$t(I_m\otimes I_n)$ with $t\,\mathbf{C}_\Gamma$ in the definition of $d_\Gamma$.
\end{remark}


\begin{theorem}\label{th:dGamma}
Assume $\Gamma$ is completely positive and its Choi matrix satisfies $\mathbf{C}_\Gamma\succ 0$.
Then for every Hermiticity-preserving $\Phi$,
\begin{equation}\label{eq:dGamma-dual}
d_\Gamma(\Phi)=\max\Bigl\{-\tr(\mathbf{C}_\Phi X):\ X\geq 0,\ \tr(\mathbf{C}_\Gamma X)\le 1\Bigr\}.
\end{equation}
Moreover,
\begin{equation}\label{eq:dGamma-closed}
d_\Gamma(\Phi)=\max\Bigl\{0,\ -\lambda_{\min}\bigl(\mathbf{C}_\Gamma^{-1/2}\mathbf{C}_\Phi\,\mathbf{C}_\Gamma^{-1/2}\bigr)\Bigr\}.
\end{equation}
\end{theorem}
\begin{proof}
Write $A:=\mathbf{C}_\Phi$ and $B:=\mathbf{C}_\Gamma$, and assume $B\succ 0$.

\medskip
\noindent\emph{Step 1: closed form.}
We have
\[
A+tB\geq 0
\iff
B^{-1/2}(A+tB)B^{-1/2}\geq 0
\iff
H+tI\geq 0,
\]
where $H:=B^{-1/2}AB^{-1/2}$ is Hermitian. Hence $A+tB\geq 0$ holds if and only if
$t\ge -\lambda_{\min}(H)$. Together with $t\ge 0$, this yields
\[
d_\Gamma(\Phi)=\inf\{t\ge 0:\ A+tB\geq 0\}=\max\{0,-\lambda_{\min}(H)\},
\]
which is \eqref{eq:dGamma-closed}.

\medskip
\noindent\emph{Step 2: a universal lower bound (variational inequality).}
Fix any $X\geq 0$ with $\tr(BX)\le 1$. If $t\ge 0$ satisfies $A+tB\geq 0$, then
$\tr((A+tB)X)\ge 0$ (since $A+tB\geq 0$ and $X\geq 0$). Therefore
\[
0\le \tr(AX)+t\,\tr(BX)\le \tr(AX)+t,
\]
and hence $-\tr(AX)\le t$. Taking the infimum over all feasible $t$ gives
\[
-\tr(AX)\le d_\Gamma(\Phi).
\]
Since this holds for every feasible $X$, we obtain
\[
\sup\{-\tr(AX): X\geq 0,\ \tr(BX)\le 1\}\ \le\ d_\Gamma(\Phi).
\]

\medskip
\noindent\emph{Step 3: attainability of the lower bound.}
Let $u$ be a unit eigenvector of $H$ corresponding to $\lambda_{\min}(H)$ and define
\[
X:=B^{-1/2}uu^*B^{-1/2}.
\]
Then $X\geq 0$ and
\[
\tr(BX)=\tr(BB^{-1/2}uu^*B^{-1/2})=\tr(uu^*)=1.
\]
Moreover,
\[
\tr(AX)=\tr\!\bigl(AB^{-1/2}uu^*B^{-1/2}\bigr)
=\tr(Huu^*)=u^*Hu=\lambda_{\min}(H).
\]
Hence $-\tr(AX)=-\lambda_{\min}(H)$. If $\lambda_{\min}(H)\ge 0$, then $d_\Gamma(\Phi)=0$ and the
supremum is clearly $0$ (take $X=0$). If $\lambda_{\min}(H)<0$, then by Step~1,
$d_\Gamma(\Phi)=-\lambda_{\min}(H)$, and the above $X$ satisfies $-\tr(AX)=d_\Gamma(\Phi)$.
Therefore
\[
d_\Gamma(\Phi)=\max\{-\tr(AX): X\geq 0,\ \tr(BX)\le 1\},
\]
which is \eqref{eq:dGamma-dual}.
\end{proof}

\begin{remark}[If $\mathbf{C}_\Gamma$ is singular]\label{rem:singularGamma}
Let $A:=\mathbf{C}_\Phi$ (Hermitian) and $B:=\mathbf{C}_\Gamma\geq 0$.
Define
\[
d_\Gamma(\Phi)=\inf\{t\ge 0:\ A+tB\geq 0\}\in[0,+\infty].
\]
There are two cases.

\medskip
\noindent\emph{(i) Kernel obstruction.}
If there exists a nonzero vector $v\in\ker(B)$ with $v^*Av<0$, then $A+tB$ cannot be positive
semidefinite for any $t\ge 0$ (since $(A+tB)v=Av$). Hence $d_\Gamma(\Phi)=+\infty$.

\medskip
\noindent\emph{(ii) Reduction to the support of $B$.}
Assume instead that $A$ is nonnegative on $\ker(B)$, i.e.\ $v^*Av\ge 0$ for all $v\in\ker(B)$.
Let $P$ be the orthogonal projection onto $\mathrm{supp}(B)=\ker(B)^\perp$.
Then $B_P:=PBP$ is positive definite on the subspace $P(\mathbb{C}^{mn})$, and one has
\[
d_\Gamma(\Phi)=\inf\{t\ge 0:\ P(A+tB)P \geq 0\}.
\]
Equivalently, $d_\Gamma(\Phi)$ is the smallest $t\ge 0$ such that
\[
B_P^{-1/2}(PAP)B_P^{-1/2}+tI \geq 0
\quad\text{on }P(\mathbb{C}^{mn}),
\]
and therefore
\[
d_\Gamma(\Phi)=\max\Bigl\{0,\ -\lambda_{\min}\bigl(B_P^{-1/2}(PAP)B_P^{-1/2}\bigr)\Bigr\}.
\]
\end{remark}

\begin{remark}[recovery of $d_{\mathrm{CP}}$]\label{rem:recover}
If $\Gamma=\Delta$ with $\Delta(A)=\tr(A)I_n$, then $\mathbf{C}_\Gamma=I_m\otimes I_n$ and
Theorem~\ref{th:dGamma} reduces to
\[
d_{\mathrm{CP}}(\Phi)=\max\{0,-\lambda_{\min}(\mathbf{C}_\Phi)\}
=\max\bigl\{-\tr(\mathbf{C}_\Phi X):\ X\geq 0,\ \tr(X)\le 1\bigr\}.
\]
\end{remark}
The directional robustness admits the following spectral characterization in terms of a generalized Rayleigh quotient.
\begin{proposition}\label{prop:rayleigh}
Let $\Phi:\mathbb{M}_m\to\mathbb{M}_n$ be Hermiticity-preserving and let $\Gamma:\mathbb{M}_m\to\mathbb{M}_n$
be completely positive. Put
\[
A:=\mathbf{C}_\Phi\in \mathbb{M}_m\otimes \mathbb{M}_n,
\qquad
B:=\mathbf{C}_\Gamma\geq 0.
\]
Define
\[
d_\Gamma(\Phi):=\inf\{t\ge 0:\ A+tB\geq 0\}\in[0,+\infty].
\]
\begin{enumerate}
\item If there exists a nonzero vector $v\in\ker(B)$ such that $v^*Av<0$, then $d_\Gamma(\Phi)=+\infty$.

\item Suppose that $v^*Av\ge 0$ for all $v\in\ker(B)$. Define
\begin{equation}\label{eq:alpha-def}
\alpha:=\sup_{\substack{x\in\mathbb{C}^{mn}\setminus\{0\}\\ x^*Bx>0}}
\frac{-x^*Ax}{x^*Bx}\ \in[-\infty,+\infty].
\end{equation}
Then
\begin{equation}\label{eq:rayleigh}
d_\Gamma(\Phi)=\max\{0,\alpha\}.
\end{equation}
In particular, if $B\succ 0$, then $x^*Bx>0$ for all $x\ne 0$ and
\[
d_\Gamma(\Phi)=\max\Bigl\{0,\ \sup_{x\ne 0}\frac{-x^*Ax}{x^*Bx}\Bigr\}
=\max\Bigl\{0,\ -\lambda_{\min}\bigl(B^{-1/2}AB^{-1/2}\bigr)\Bigr\}.
\]
\end{enumerate}
\end{proposition}

\begin{proof}
Recall that for a Hermitian matrix $M$,
\[
M\geq 0 \quad\Longleftrightarrow\quad x^*Mx\ge 0\ \text{for all vectors }x.
\]
We apply this to $M=A+tB$.

\medskip
\noindent\emph{(1) Kernel obstruction.}
Let $0\ne v\in\ker(B)$ with $v^*Av<0$. Then for every $t\ge 0$,
\[
v^*(A+tB)v = v^*Av + t\,v^*Bv = v^*Av <0,
\]
so $A+tB\ngeq 0$ for all $t\ge 0$. Hence the feasible set
$\{t\ge 0:\ A+tB\geq 0\}$ is empty and therefore $d_\Gamma(\Phi)=+\infty$.

\medskip
\noindent\emph{(2) Rayleigh-quotient formula when $A$ is nonnegative on $\ker(B)$.}
Assume $v^*Av\ge 0$ for all $v\in\ker(B)$, and define $\alpha$ by \eqref{eq:alpha-def}.

\smallskip
\noindent\underline{Step 2a: lower bound $d_\Gamma(\Phi)\ge \max\{0,\alpha\}$.}
Let $t\ge 0$ be feasible, i.e.\ $A+tB\geq 0$. Fix any $x\ne 0$ with $x^*Bx>0$.
Then
\[
0\le x^*(A+tB)x = x^*Ax + t\,x^*Bx,
\]
so
\[
t \ge \frac{-x^*Ax}{x^*Bx}.
\]
Taking the supremum over all such $x$ yields $t\ge \alpha$. Since also $t\ge 0$,
we have $t\ge \max\{0,\alpha\}$. As this holds for every feasible $t$, taking the infimum
over feasible $t$ gives
\[
d_\Gamma(\Phi)\ge \max\{0,\alpha\}.
\]

\smallskip
\noindent\underline{Step 2b: upper bound $d_\Gamma(\Phi)\le \max\{0,\alpha\}$.}
We consider two cases.

\smallskip
\noindent\emph{Case 1: $\alpha=+\infty$.}
By definition of $\alpha$, for every $t\ge 0$ there exists a vector $x$ with $x^*Bx>0$ and
\[
\frac{-x^*Ax}{x^*Bx} > t,
\]
equivalently $x^*Ax + t\,x^*Bx < 0$. Hence $A+tB\nsucceq 0$ for every $t\ge 0$, so the
feasible set is empty and $d_\Gamma(\Phi)=+\infty$. Since $\max\{0,\alpha\}=+\infty$,
\eqref{eq:rayleigh} holds.

\smallskip
\noindent\emph{Case 2: $\alpha<+\infty$.}
Let $t>\max\{0,\alpha\}$. We claim that $A+tB\geq 0$.
Indeed, fix an arbitrary vector $x\in\mathbb{C}^{mn}$.

If $x\in\ker(B)$, then $x^*Bx=0$ and thus
\[
x^*(A+tB)x = x^*Ax \ge 0
\]
by the standing assumption on $\ker(B)$.

If $x\notin\ker(B)$, then $x^*Bx>0$ because $B\geq 0$. By the definition of $\alpha$,
\[
\frac{-x^*Ax}{x^*Bx}\le \alpha < t,
\]
which is equivalent to $x^*Ax + t\,x^*Bx>0$. Hence $x^*(A+tB)x\ge 0$ in this case as well.

Therefore $x^*(A+tB)x\ge 0$ for all $x$, and thus $A+tB\geq 0$.
So every $t>\max\{0,\alpha\}$ is feasible, which implies
\[
d_\Gamma(\Phi)\le \max\{0,\alpha\}.
\]

\smallskip
Combining Step 2a and Step 2b proves \eqref{eq:rayleigh}.

\medskip
\noindent\emph{Final statement when $B\succ 0$.}
If $B\succ 0$, then $x^*Bx>0$ for all $x\ne 0$, and
\[
\alpha=\sup_{x\ne 0}\frac{-x^*Ax}{x^*Bx}.
\]
Writing $x=B^{-1/2}y$ gives
\[
\frac{-x^*Ax}{x^*Bx}
=
\frac{-y^*(B^{-1/2}AB^{-1/2})y}{y^*y}.
\]
By the Rayleigh--Ritz characterization of the minimum eigenvalue,
\[
\sup_{y\ne 0}\frac{-y^*(B^{-1/2}AB^{-1/2})y}{y^*y}
=
-\lambda_{\min}(B^{-1/2}AB^{-1/2}),
\]
which yields the displayed formula.
\end{proof}

 A desirable property for any robust measure is continuity; the following proposition establishes that $d_\Gamma$ satisfies a Lipschitz condition with respect to the map $\Phi$.
\begin{proposition}[Lipschitz stability]\label{prop:Lip-dGamma}
Let $\Gamma$ be completely positive with $\mathbf{C}_\Gamma\succ 0$. For any Hermiticity-preserving
$\Phi,\Psi:\mathbb{M}_m\to\mathbb{M}_n$ one has
\[
\bigl|d_\Gamma(\Phi)-d_\Gamma(\Psi)\bigr|
\le
\bigl\|\mathbf{C}_\Gamma^{-1/2}\bigl(\mathbf{C}_\Phi-\mathbf{C}_\Psi\bigr)\mathbf{C}_\Gamma^{-1/2}\bigr\|_\infty,
\]
where $\|\cdot\|_\infty$ denotes the operator norm.
In particular, taking $\Gamma=\Delta$ (so $\mathbf{C}_\Gamma=I_m\otimes I_n$) gives
\[
\bigl|d_{\mathrm{CP}}(\Phi)-d_{\mathrm{CP}}(\Psi)\bigr|
\le
\|\mathbf{C}_\Phi-\mathbf{C}_\Psi\|_\infty.
\]
\end{proposition}
\begin{proof}
Since $\Phi$ and $\Psi$ are Hermiticity-preserving, their Choi matrices $\mathbf{C}_\Phi$ and
$\mathbf{C}_\Psi$ are Hermitian. Put
\[
A:=\mathbf{C}_\Gamma^{-1/2}\mathbf{C}_\Phi\,\mathbf{C}_\Gamma^{-1/2},
\qquad
A':=\mathbf{C}_\Gamma^{-1/2}\mathbf{C}_\Psi\,\mathbf{C}_\Gamma^{-1/2},
\]
which are also Hermitian. By Theorem~\ref{th:dGamma} (full-rank case),
\[
d_\Gamma(\Phi)=f(\lambda_{\min}(A)),\qquad d_\Gamma(\Psi)=f(\lambda_{\min}(A')),
\]
where $f(s)=\max\{0,-s\}$.

\smallskip
\noindent\emph{Step 1: $f$ is $1$-Lipschitz.}
For any $a,b\in\mathbb{R}$ one checks directly that
$|f(a)-f(b)|\le |a-b|$ (consider the cases $a,b\ge 0$, $a,b\le 0$, and $ab<0$). Hence
\[
|d_\Gamma(\Phi)-d_\Gamma(\Psi)|
\le |\lambda_{\min}(A)-\lambda_{\min}(A')|.
\]

\smallskip
\noindent\emph{Step 2: $\lambda_{\min}$ is $1$-Lipschitz in operator norm.}
Let $H$ be Hermitian. Recall $\lambda_{\min}(H)=\min_{\|x\|=1} x^*Hx$.
For Hermitian $A,A'$ and any unit vector $x$,
\[
x^*Ax = x^*A'x + x^*(A-A')x \ge \lambda_{\min}(A') - \|A-A'\|_\infty,
\]
because $|x^*(A-A')x|\le \|A-A'\|_\infty$. Taking the minimum over unit $x$ gives
\[
\lambda_{\min}(A)\ge \lambda_{\min}(A')-\|A-A'\|_\infty.
\]
Interchanging the roles of $A$ and $A'$ yields
$\lambda_{\min}(A')\ge \lambda_{\min}(A)-\|A-A'\|_\infty$.
Combining these two inequalities gives
\[
|\lambda_{\min}(A)-\lambda_{\min}(A')|\le \|A-A'\|_\infty.
\]

Therefore
\[
|d_\Gamma(\Phi)-d_\Gamma(\Psi)|
\le \|A-A'\|_\infty
=
\bigl\|\mathbf{C}_\Gamma^{-1/2}\bigl(\mathbf{C}_\Phi-\mathbf{C}_\Psi\bigr)\mathbf{C}_\Gamma^{-1/2}\bigr\|_\infty.
\]
The special case $\Gamma=\Delta$ follows since $\mathbf{C}_\Delta=I_m\otimes I_n$.
\end{proof}
\medskip
The following proposition establishes comparison inequalities between robustness measures associated with different reference maps. 
\begin{proposition}\label{prop:compareGamma}
Let $\Phi:\mathbb{M}_m\to\mathbb{M}_n$ be Hermiticity-preserving and let
$\Gamma_1,\Gamma_2$ be completely positive. Write $B_i:=\mathbf{C}_{\Gamma_i}\geq 0$.

\begin{enumerate}
\item If $B_2 \geq c\,B_1$ for some constant $c>0$, then
\[
d_{\Gamma_2}(\Phi)\ \le\ \frac{1}{c}\,d_{\Gamma_1}(\Phi).
\]

\item If $B_1\geq c_1 I$ and $B_1\preceq c_2 I$ for some $0<c_1\le c_2$, then
\[
\frac{1}{c_2}\,d_{\mathrm{CP}}(\Phi)\ \le\ d_{\Gamma_1}(\Phi)\ \le\ \frac{1}{c_1}\,d_{\mathrm{CP}}(\Phi),
\]
where $d_{\mathrm{CP}}$ corresponds to $\Delta(A)=\tr(A)I_n$ (so $\mathbf{C}_\Delta=I_m\otimes I_n$).
\end{enumerate}
\end{proposition}

\begin{proof}
(1) Let $t_1:=d_{\Gamma_1}(\Phi)$. Then $A+t_1B_1\geq 0$, where $A:=\mathbf{C}_\Phi$.
If $B_2\geq c\,B_1$, then for $t_2:=t_1/c$ we have $t_2B_2\geq t_1B_1$, hence
\[
A+t_2B_2 \geq A+t_1B_1 \geq 0.
\]
Thus $t_2$ is feasible for $d_{\Gamma_2}(\Phi)$, so $d_{\Gamma_2}(\Phi)\le t_2=t_1/c$.

(2) Apply (1) twice with $B_2=B_1$, $B_1=I_m\otimes I_n$ and vice versa:
if $B_1\geq c_1 I$, then $d_{\Gamma_1}(\Phi)\le d_{\mathrm{CP}}(\Phi)/c_1$; and
if $I\geq (1/c_2)B_1$ (equivalently $B_1\preceq c_2 I$), then
$d_{\mathrm{CP}}(\Phi)\le c_2\,d_{\Gamma_1}(\Phi)$, i.e.\ $d_{\Gamma_1}(\Phi)\ge d_{\mathrm{CP}}(\Phi)/c_2$.
\end{proof}

\begin{lemma}\label{lem:choi-tensor}
Let $\Phi:\mathbb{M}_m\to\mathbb{M}_n$ and $\Psi:\mathbb{M}_r\to\mathbb{M}_s$ be linear maps.
Under the canonical identification $\mathbb{M}_{mr}\cong \mathbb{M}_m\otimes \mathbb{M}_r$ and
$\mathbb{M}_{ns}\cong \mathbb{M}_n\otimes \mathbb{M}_s$, there exists a fixed permutation unitary
$U$ (depending only on $m,n,r,s$) such that
\[
\mathbf{C}_{\Phi\otimes\Psi} \;=\; U\,(\mathbf{C}_\Phi\otimes \mathbf{C}_\Psi)\,U^*.
\]
In particular, $\mathbf{C}_{\Phi\otimes\Psi}$ is unitarily equivalent to $\mathbf{C}_\Phi\otimes \mathbf{C}_\Psi$
and hence has the same spectrum and the same positive semidefiniteness properties.
\end{lemma}

\begin{proof}
Recall the Choi matrices
\[
\mathbf{C}_\Phi=\sum_{i,j=1}^m E_{ij}\otimes \Phi(E_{ij}),\qquad
\mathbf{C}_\Psi=\sum_{k,\ell=1}^r E_{k\ell}\otimes \Psi(E_{k\ell}).
\]
Identify $\mathbb{M}_{mr}\cong \mathbb{M}_m\otimes \mathbb{M}_r$ via matrix units
$E_{(i,k),(j,\ell)}=E_{ij}\otimes E_{k\ell}$. Then, by definition of the Choi matrix for
$\Phi\otimes \Psi:\mathbb{M}_{mr}\to\mathbb{M}_{ns}$, we have
\[
\mathbf{C}_{\Phi\otimes\Psi}
=\sum_{i,j=1}^m\sum_{k,\ell=1}^r (E_{ij}\otimes E_{k\ell})\otimes \bigl(\Phi(E_{ij})\otimes \Psi(E_{k\ell})\bigr),
\]
which lies in $(\mathbb{M}_m\otimes\mathbb{M}_r)\otimes(\mathbb{M}_n\otimes\mathbb{M}_s)$.
On the other hand,
\[
\mathbf{C}_\Phi\otimes \mathbf{C}_\Psi
=\sum_{i,j=1}^m\sum_{k,\ell=1}^r (E_{ij}\otimes \Phi(E_{ij}))\otimes (E_{k\ell}\otimes \Psi(E_{k\ell})),
\]
which lies in $(\mathbb{M}_m\otimes\mathbb{M}_n)\otimes(\mathbb{M}_r\otimes\mathbb{M}_s)$.
Let $U$ be the fixed permutation unitary implementing the swap of the middle tensor factors
$\mathbb{C}^n\otimes \mathbb{C}^r\to \mathbb{C}^r\otimes \mathbb{C}^n$.
Conjugating $\mathbf{C}_\Phi\otimes \mathbf{C}_\Psi$ by $U$ performs exactly this swap and yields
the above expression for $\mathbf{C}_{\Phi\otimes\Psi}$.
\end{proof}
\medskip

We now show that the directional robustness is invariant under tensor products with a full-rank completely positive map.
\begin{proposition}\label{prop:tensor-fullrank}
Let $\Phi:\mathbb{M}_m\to\mathbb{M}_n$ be Hermiticity-preserving and let
$\Gamma:\mathbb{M}_m\to\mathbb{M}_n$ be completely positive. Let $\Xi:\mathbb{M}_r\to\mathbb{M}_s$
be completely positive with $\mathbf{C}_\Xi\succ 0$. Then
\[
d_{\Gamma\otimes \Xi}(\Phi\otimes \Xi) \;=\; d_\Gamma(\Phi),
\]
where $d_{\Gamma\otimes\Xi}$ is computed with respect to the Choi matrices
$\mathbf{C}_{\Gamma\otimes\Xi}$ and $\mathbf{C}_{\Phi\otimes\Xi}$.
\end{proposition}

\begin{proof}
Put $A:=\mathbf{C}_\Phi$, $B:=\mathbf{C}_\Gamma$, and $K:=\mathbf{C}_\Xi\succ 0$.
By Lemma~\ref{lem:choi-tensor} there exists a permutation unitary $U$ such that
\[
\mathbf{C}_{\Phi\otimes\Xi}=U(A\otimes K)U^*,\qquad
\mathbf{C}_{\Gamma\otimes\Xi}=U(B\otimes K)U^*.
\]
Hence for every $t\ge 0$,
\[
\mathbf{C}_{\Phi\otimes\Xi}+t\,\mathbf{C}_{\Gamma\otimes\Xi}
=
U\bigl((A+tB)\otimes K\bigr)U^*.
\]
Therefore $t$ is feasible for $d_{\Gamma\otimes\Xi}(\Phi\otimes\Xi)$ if and only if
$(A+tB)\otimes K\geq 0$.

We claim that for $K\succ 0$,
\[
(A+tB)\otimes K\geq 0\quad\Longleftrightarrow\quad A+tB\geq 0.
\]
Indeed, if $A+tB\geq 0$ then $(A+tB)\otimes K\geq 0$ is immediate.
Conversely, if $A+tB\nsucceq 0$, then there exists $x\ne 0$ with $x^*(A+tB)x<0$.
Choose any $y\ne 0$. Since $K\succ 0$ we have $y^*Ky>0$, and hence
\[
(x\otimes y)^*((A+tB)\otimes K)(x\otimes y)=(x^*(A+tB)x)(y^*Ky)<0,
\]
so $(A+tB)\otimes K\nsucceq 0$. This proves the claim.

Thus $t$ is feasible for $d_{\Gamma\otimes\Xi}(\Phi\otimes\Xi)$ if and only if it is feasible
for $d_\Gamma(\Phi)$. Taking infima over $t\ge 0$ gives
$d_{\Gamma\otimes \Xi}(\Phi\otimes \Xi)= d_\Gamma(\Phi)$.
\end{proof}

\begin{corollary}[amplification by the identity map]\label{cor:dcp-id}
Let $\Phi:\mathbb{M}_m\to\mathbb{M}_n$ be Hermiticity-preserving and let $\mathrm{id}_r$ be the identity map on
$\mathbb{M}_r$. Then
\[
d_{\mathrm{CP}}(\Phi\otimes \mathrm{id}_r) = r\, d_{\mathrm{CP}}(\Phi).
\]
\end{corollary}

\begin{proof}
Recall that
\[
\mathbf{C}_{\mathrm{id}_r}=\sum_{i,j=1}^r E_{ij}\otimes E_{ij}=ww^*,
\qquad
w:=\sum_{i=1}^r e_i\otimes e_i,
\]
so $\mathbf{C}_{\mathrm{id}_r}\geq 0$ has eigenvalues $\{r,0,\dots,0\}$ (since $\|w\|^2=r$).
By Lemma~\ref{lem:choi-tensor}, $\mathbf{C}_{\Phi\otimes \mathrm{id}_r}$ is unitarily equivalent to
$\mathbf{C}_\Phi\otimes \mathbf{C}_{\mathrm{id}_r}$. The eigenvalues of a tensor product are the
pairwise products $\{\lambda_i(\mathbf{C}_\Phi)\lambda_j(\mathbf{C}_{\mathrm{id}_r})\}$ (with multiplicity),
so $\mathbf{C}_\Phi\otimes \mathbf{C}_{\mathrm{id}_r}$ has eigenvalues
$\{r\,\lambda_i(\mathbf{C}_\Phi)\}$ together with additional zeros. Hence
\[
\lambda_{\min}(\mathbf{C}_{\Phi\otimes \mathrm{id}_r})=\min\{0,\ r\,\lambda_{\min}(\mathbf{C}_\Phi)\}.
\]
Therefore,
\[
d_{\mathrm{CP}}(\Phi\otimes \mathrm{id}_r)
=\max\{0,-\lambda_{\min}(\mathbf{C}_{\Phi\otimes \mathrm{id}_r})\}
= r\,\max\{0,-\lambda_{\min}(\mathbf{C}_\Phi)\}
= r\,d_{\mathrm{CP}}(\Phi).
\]
\end{proof}


\begin{remark}[Connection with entanglement witnesses]
If $\Phi:\mathbb{M}_m\to\mathbb{M}_n$ is positive but not completely positive, then its Choi matrix
$\mathbf{C}_\Phi$ is block-positive but not positive semidefinite; equivalently, $\mathbf{C}_\Phi$
acts as an entanglement witness in the usual Choi--Jamio{\l}kowski correspondence
(see, e.g., \cite{Watrous2018,Lewenstein}).
Moreover, the maximal possible violation over states satisfies
\[
\sup_{\rho\geq 0,\ \tr(\rho)=1}\bigl(-\tr(\mathbf{C}_\Phi\rho)\bigr)
= -\lambda_{\min}(\mathbf{C}_\Phi)
= d_{\mathrm{CP}}(\Phi).
\]
\end{remark}


\bigskip
In \cite{Ando2018} Ando presented decompositions for Hermitian and for positive linear maps \(\Phi: \Mmn\). In the rest of this scetion, we will discuss such decompositions, see also \cite{Dad-Kian-Mos2024}.

It  was shown \cite[Theorem 2.2]{Ando2018}  that  a Hermitian linear map  can be decomposed as \(\Phi = \Phi^{(1)} - \Phi^{(2)}\) using the Jordan decomposition of the Choi matrix: \(\mathbf{C}_\Phi = \mathbf{C}_\Phi^+ - \mathbf{C}_\Phi^-\), with \(\mathbf{C}_{\Phi^{(1)}} = \mathbf{C}_\Phi^+\) and \(\mathbf{C}_{\Phi^{(2)}} = \mathbf{C}_\Phi^-\). The norm of the sum is:
\[
\left\| \Phi^{(1)} + \Phi^{(2)} \right\| \leq m \left\| \Phi \right\|.
\]
If \(\Phi: \mathbb{M}_m \to \mathbb{M}_n\) is a positive linear map, then there are completely positive maps \(\Phi^{(1)}\) and \(\Phi^{(2)}\) such that \(\Phi = \Phi^{(1)} - \Phi^{(2)}\), the Choi matrix of \(\Phi^{(1)} + \Phi^{(2)}\) is block-diagonal, and \(\Phi^{(1)}(I_m) + \Phi^{(2)}(I_m) = m \Phi(I_m)\) \cite[Theorem 2.4]{Ando2018}.

We investigate how the CP-distance connects  to the structural properties of Ando’s decompositions, highlighting how a larger CP-distance corresponds to a greater negative contribution in the decomposition.

To understand the effect of the CP-distance, first note that \(d_{\text{CP}}(\Phi) = \left\| \mathbf{C}_\Phi^- \right\|\). Indeed, \(d_{\text{CP}}(\Phi) = \max(0, -\lambda_{\text{min}}(\mathbf{C}_\Phi))\), where \(\lambda_{\text{min}}(\mathbf{C}_\Phi)\) is the smallest eigenvalue of \(\mathbf{C}_\Phi\). In the Jordan decomposition, \(\mathbf{C}_\Phi = \mathbf{C}_\Phi^+ - \mathbf{C}_\Phi^-\), where
\[
\mathbf{C}_\Phi^+ = \sum_{\lambda_j \geq 0} \lambda_j P_j \quad \text{and} \quad
\mathbf{C}_\Phi^- = \sum_{\lambda_j < 0} (-\lambda_j) P_j,
\]
with \(\lambda_j\) as the eigenvalues of \(\mathbf{C}_\Phi\), and \(P_j\) as orthogonal projectors. The norm \(\left\| \mathbf{C}_\Phi^- \right\|\) is the largest eigenvalue of \(\mathbf{C}_\Phi^-\), which is:
\[
\left\| \mathbf{C}_\Phi^- \right\| = \max_{\lambda_j < 0} (-\lambda_j) = -\lambda_{\text{min}}(\mathbf{C}_\Phi),
\]
since \(\lambda_{\text{min}}\) is the smallest (most negative) eigenvalue. Assuming \(\lambda_{\text{min}} < 0\) (as the context of comparing CP-distances suggests \(\Psi\) is not CP), we have:
\[
d_{\text{CP}}(\Psi) = -\lambda_{\text{min}}(\mathbf{C}_\Psi) = \left\| \mathbf{C}_\Psi^- \right\|.
\]
If \(\lambda_{\text{min}} \geq 0\), then \(d_{\text{CP}}(\Psi) = 0\), and \(\mathbf{C}_\Psi^- = 0\), so the equality still holds.

Now let \(\Psi\) be another Hermitian linear map with \(d_{\text{CP}}(\Phi) \leq d_{\text{CP}}(\Psi)\). This implies that
\[
\left\| \mathbf{C}_\Phi^- \right\| \leq \left\| \mathbf{C}_\Psi^- \right\|.
\]
This means the negative contribution in \(\Psi\)’s decomposition (\(\mathbf{C}_{\Psi^{(2)}} = \mathbf{C}_\Psi^-\)) is larger than in \(\Phi\)’s (\(\mathbf{C}_{\Phi^{(2)}} = \mathbf{C}_\Phi^-\)), indicating \(\Psi\) is further from being CP. The norm bounds \(\left\| \Phi^{(1)} + \Phi^{(2)} \right\| \leq m \left\| \Phi \right\|\) and \(\left\| \Psi^{(1)} + \Psi^{(2)} \right\| \leq m \left\| \Psi \right\|\) are not directly affected, but the relative sizes of the CP maps change:
- \(\left\| \Phi^{(2)} \right\| = \left\| \chi(\mathbf{C}_\Phi^-) \right\|\) and \(\left\| \Psi^{(2)} \right\| = \left\| \chi(\mathbf{C}_\Psi^-) \right\|\), where \(\chi\) is the partial trace over \(\mathbb{M}_m\). The larger \(\left\| \mathbf{C}_\Psi^- \right\|\) suggests a larger \(\left\| \Psi^{(2)} \right\|\), making the negative part more significant in \(\Psi\)’s decomposition.

\bigskip

\section{Hierarchy of Deviation and Entanglement Certification}

In the previous sections, we analyzed the distance $d_{\mathrm{CP}}(\Phi)$ which measures the deviation of a map from the cone of completely positive maps (or $n$-positive maps). However, in quantum theory positivity  is not a binary property. Between the cone of positive maps and the cone of completely positive maps lies a nested hierarchy of $k$-positive maps, which are intimately related to the \emph{Schmidt Number} (entanglement depth) of quantum states.

In this section, we generalize the CP-distance to a \emph{hierarchy of deviation}, denoted $d_k(\Phi)$. We derive its spectral formula and prove that it serves as a rigorous threshold for certifying high-dimensional entanglement.

\subsection{Definition and Spectral Characterization}

Recall that a linear map $\Phi: \mathbb{M}_m \to \mathbb{M}_n$ is called \emph{$k$-positive} if the map $\Phi \otimes \mathrm{id}_k$ is positive. Equivalently, via the Choi-Jamio{\l}kowski isomorphism, $\Phi$ is $k$-positive if and only if its Choi matrix $\mathbf{C}_\Phi$ is positive on all vectors with Schmidt rank at most $k$.
Let $\mathcal{P}_k$ denote the cone of $k$-positive maps. We have the inclusion $\mathcal{P}_n \subseteq \mathcal{P}_{n-1} \subseteq \dots \subseteq \mathcal{P}_1$.

We define the $k$-deviation as the minimal depolarizing noise required to push a Hermitian map into the cone $\mathcal{P}_k$.

\begin{definition}[$k$-Deviation]
Let $\Phi: \mathbb{M}_m \to \mathbb{M}_n$ be a Hermitian linear map and let $\Delta: \mathbb{M}_m \to \mathbb{M}_n$ be the depolarizing map defined by $\Delta(A) = \tr(A)I_n$. For any integer $1 \leq k \leq \min(m,n)$, the \emph{$k$-deviation} of $\Phi$ is defined as:
\begin{equation}
d_k(\Phi) := \inf \left\{ t \geq 0 : \Phi + t \Delta \in \mathcal{P}_k \right\}.
\end{equation}
\end{definition}

Note that $d_{\min(m,n)}(\Phi)$ coincides with the CP-distance $d_{\mathrm{CP}}(\Phi)$ derived in Theorem~\ref{th-alli}.

While $d_{\mathrm{CP}}$ is determined by the global minimum eigenvalue of the Choi matrix, the following theorem shows that $d_k$ is determined by the minimum expectation value over the subspace of states with bounded entanglement depth.

\begin{theorem}[Spectral Formula for $d_k$]\label{thm:dk-spectral}
Let $\Phi: \mathbb{M}_m \to \mathbb{M}_n$ be a Hermitian linear map. Let $S_k$ denote the set of unit vectors in $\mathbb{C}^m \otimes \mathbb{C}^n$ with Schmidt rank at most $k$:
\[
S_k := \left\{ |v\rangle \in \mathbb{C}^m \otimes \mathbb{C}^n : \||v\rangle\| = 1, \ \mathrm{SR}(|v\rangle) \leq k \right\}.
\]
Then, the $k$-deviation is given by:
\begin{equation}\label{eq:dk-formula}
d_k(\Phi) = \max \left\{ 0, \ -\min_{|v\rangle \in S_k} \langle v | \mathbf{C}_\Phi | v \rangle \right\}.
\end{equation}
\end{theorem}

\begin{proof}
Consider the perturbed map $\Psi_t = \Phi + t\Delta$. The Choi matrix of the depolarizing map is $\mathbf{C}_\Delta = \sum_{ij} E_{ij} \otimes \delta_{ij}I_n = \sum_i E_{ii} \otimes I_n = I_m \otimes I_n$. By linearity, the Choi matrix of $\Psi_t$ is:
\[
\mathbf{C}_{\Psi_t} = \mathbf{C}_\Phi + t (I_m \otimes I_n).
\]
By the structural characterization of $k$-positivity, $\Psi_t \in \mathcal{P}_k$ if and only if $\langle v | \mathbf{C}_{\Psi_t} | v \rangle \geq 0$ for all $|v\rangle \in S_k$.
Substituting the expression for $\mathbf{C}_{\Psi_t}$, this condition becomes:
\[
\langle v | \left( \mathbf{C}_\Phi + t I \right) | v \rangle \geq 0 \quad \forall |v\rangle \in S_k.
\]
Expanding the inner product and using the fact that $\langle v | I | v \rangle = 1$ (since vectors are normalized):
\[
\langle v | \mathbf{C}_\Phi | v \rangle + t \geq 0 \quad \iff \quad t \geq -\langle v | \mathbf{C}_\Phi | v \rangle.
\]
For $\Psi_t$ to be $k$-positive, this inequality must hold for \emph{all} $|v\rangle \in S_k$. Therefore, $t$ must satisfy:
\[
t \geq \sup_{|v\rangle \in S_k} \left( -\langle v | \mathbf{C}_\Phi | v \rangle \right) = - \inf_{|v\rangle \in S_k} \langle v | \mathbf{C}_\Phi | v \rangle.
\]
Since the set $S_k$ is compact (it is a closed, bounded subset of the unit sphere) and the map $|v\rangle \mapsto \langle v | \mathbf{C}_\Phi | v \rangle$ is continuous, the infimum is attained as a minimum.
Combining this with the definition condition $t \geq 0$, the minimal such $t$ is:
\[
d_k(\Phi) = \max \left\{ 0, \ -\min_{|v\rangle \in S_k} \langle v | \mathbf{C}_\Phi | v \rangle \right\}.
\]
\end{proof}

\subsection{Application: Universal Construction of Optimal Witnesses}

The quantity $d_k(\Phi)$ is not merely a geometric distance; it serves as a universal benchmark for entanglement certification. Standard entanglement witnesses are typically constructed ad-hoc for specific target states. In contrast, the following theorem demonstrates that $d_k(\Phi)$ allows \emph{any} Hermitian map to be converted into a witness for Schmidt number $>k$, and establishes that $-d_k(\Phi)$ is the \emph{optimal} (tightest possible) threshold for this detection.

\begin{theorem}[Optimal Dimensionality Certification]\label{thm:cert}
Let $\Phi: \mathbb{M}_m \to \mathbb{M}_n$ be a Hermitian linear map.
 Let $\rho$ be a quantum state. If the expectation value satisfies
    \begin{equation}\label{eq:cert-condition}
    \tr(\mathbf{C}_\Phi \rho) < -d_k(\Phi),
    \end{equation}
then the Schmidt number of $\rho$ is strictly greater than $k$ (i.e., $\mathrm{SN}(\rho) \geq k+1$). Moreover, the threshold $-d_k(\Phi)$ is sharp. There exists a quantum state $\sigma$ with $\mathrm{SN}(\sigma) \leq k$ such that
    \[
    \tr(\mathbf{C}_\Phi \sigma) = -d_k(\Phi).
    \]
    Consequently, the inequality \eqref{eq:cert-condition} cannot be tightened without excluding valid states of Schmidt rank $k$.
\end{theorem}

\begin{proof}
(1) \emph{Certification.} We proceed by contraposition. Assume $\mathrm{SN}(\rho) \leq k$. By definition, $\rho$ lies in the convex hull of pure states with Schmidt rank at most $k$. That is, $\rho = \sum_i p_i |v_i\rangle\langle v_i|$ with $|v_i\rangle \in S_k$.
From Theorem~\ref{thm:dk-spectral}, we know that for any vector $|v\rangle \in S_k$, the expectation value is bounded below by the negative of the $k$-deviation (assuming $d_k(\Phi) > 0$; if $d_k(\Phi)=0$, the bound is simply 0). Specifically:
\[
\langle v | \mathbf{C}_\Phi | v \rangle \geq \min_{|u\rangle \in S_k} \langle u | \mathbf{C}_\Phi | u \rangle.
\]
Let $\mu_k = \min_{|u\rangle \in S_k} \langle u | \mathbf{C}_\Phi | u \rangle$. By Theorem~\ref{thm:dk-spectral}, $d_k(\Phi) = \max(0, -\mu_k)$.
If $\mu_k \geq 0$, then $d_k(\Phi)=0$ and $\langle v | \mathbf{C}_\Phi | v \rangle \geq 0 \geq -d_k(\Phi)$.
If $\mu_k < 0$, then $d_k(\Phi) = -\mu_k$, so $\langle v | \mathbf{C}_\Phi | v \rangle \geq \mu_k = -d_k(\Phi)$.
In all cases, $\langle v_i | \mathbf{C}_\Phi | v_i \rangle \geq -d_k(\Phi)$.

Substituting this back into the sum:
\[
\tr(\mathbf{C}_\Phi \rho) \geq \sum_i p_i \bigl( -d_k(\Phi) \bigr) = -d_k(\Phi) \sum_i p_i = -d_k(\Phi).
\]
This contradicts the condition $\tr(\mathbf{C}_\Phi \rho) < -d_k(\Phi)$. Therefore, the assumption $\mathrm{SN}(\rho) \leq k$ must be false, implying $\mathrm{SN}(\rho) \geq k+1$.

(2) \emph{Optimality.} By Theorem~\ref{thm:dk-spectral}, $d_k(\Phi) = \max\{0, -\mu_k\}$ where $\mu_k = \min_{|v\rangle \in S_k} \langle v | \mathbf{C}_\Phi | v \rangle$.
Assuming the non-trivial case where $d_k(\Phi) > 0$, we have $d_k(\Phi) = -\mu_k$. Since the set $S_k$ is compact (it is a closed subset of the unit sphere) and the map $|v\rangle \mapsto \langle v | \mathbf{C}_\Phi | v \rangle$ is continuous, the minimum is attained.
Thus, there exists a vector $|v_*\rangle \in S_k$ such that $\langle v_* | \mathbf{C}_\Phi | v_* \rangle = \mu_k = -d_k(\Phi)$.
Let $\sigma = |v_*\rangle\langle v_*|$. Then $\mathrm{SN}(\sigma) \leq k$ and $\tr(\mathbf{C}_\Phi \sigma) = -d_k(\Phi)$. This state sits exactly on the threshold, proving that the bound is tight.
\end{proof}

This result implies that $d_k(\Phi)$ can be used to explicitly construct entanglement witnesses. Recall that an operator $W$ is a \emph{witness of class $k$} if it is positive on all states of Schmidt rank $k$ but negative on some higher-rank states.

\begin{corollary}[Generated Witnesses]
For any Hermitian map $\Phi$ such that $d_{k+1}(\Phi) > d_k(\Phi)$, the shifted Choi matrix
\begin{equation}
W_{\Phi, k} := \mathbf{C}_\Phi + d_k(\Phi) I_{mn}
\end{equation}
is a non-trivial entanglement witness of class $k$. It detects exactly those states $\rho$ that violate the bound in Theorem~\ref{thm:cert}.
\end{corollary}

\begin{example}[The Reduction Map Hierarchy]
Consider the Reduction Map $\mathcal{R}(X) = \tr(X)I - X$. Its Choi matrix is $\mathbf{C}_\mathcal{R} = I - |\Omega\rangle\langle\Omega|$.
Using Theorem~\ref{thm:dk-spectral}, the minimum overlap of a Schmidt-rank-$k$ vector with $|\Omega\rangle$ yields $d_k(\mathcal{R}) = k-1$.
Applying Corollary 4.4, we generate the witness:
\[
W_{\mathcal{R}, k} = \mathbf{C}_\mathcal{R} + (k-1)I = kI - |\Omega\rangle\langle\Omega|.
\]
Up to normalization, this recovers the standard \emph{reduction criterion} witness for depth $k+1$.
Crucially, our result derives this constant $k-1$ not from ad-hoc optimization, but from the general spectral properties of the map hierarchy.
\end{example}

\noindent \textit{Conflict of Interest Statement.}  There is no conflict of interest.
\medskip

\noindent \textit{Ethical Statement.}  Not applicable. This research did not involve human participants, personal data, or animals.

\medskip
\noindent \textit{Informed Consent.} Not applicable.

\medskip
\noindent \textit{Funding Statement.} No funding was received for conducting this study.

\medskip
\noindent\textit{Data Availability Statement.} Data sharing not applicable to this article as no datasets were generated or analyzed during the current study.


\end{document}